\begin{document}
\title*{A note on ``Higher order linear differential equations for unitary matrix integrals: applications and generalisations"}
\author{Peter J. Forrester and Fei Wei}
\institute{Peter J. Forrester \at School of Mathematics and Statistics,  The University of Melbourne,
Victoria 3010, Australia. \email{pjforr@unimelb.edu.au}
\and Fei Wei \at Department of Mathematics, University of Sussex, Brighton, BN1 9RH, United Kingdom.  \email{weif0831@gmail.com}}
%
%
\maketitle

\abstract{
In this note, we briefly introduce the background and motivation of the collaborative work \href{https://arxiv.org/pdf/2508.20797}{[arXiv:2508.20797]}, and provide an outline of the main results.
The latter relates to matrix and higher order scalar differential equations satisfied by certain Hankel and Toeplitz determinants involving I-Bessel functions, or equivalently certain unitary matrix integrals, and moreover puts this property in a broader context. We also investigate large gaps between zeros of the derivatives of the Hardy $\mathsf{Z}$-function, assuming the validity of a certain joint moments conjecture in random matrix theory.
}

\bigskip

The study of random matrices was first observed to have deep connections with number theory in the 1970s. In particular, Montgomery \cite{M73} investigated the statistical distribution of normalized pairs of zeros of the Riemann zeta function on the critical line 
(Riemann zeros)
under the assumption of the Riemann hypothesis. 
 We recall that for a real valued sequence $\{x_n\}_{n=1}^\infty$ the pair correlation, $r_N(s)$ say, is specified as
 ${1 \over N} \# \{ 1 \le m,n\le N, m \ne n \, : \,
 |x_m - x_n | \le s \}$; for the modification of this definition required for the Riemann zeros, see e.g.~\cite{Wi}.
The pair correlation function conjectured in \cite{M73}, now known as Montgomery’s pair correlation conjecture, coincides precisely with the  pair correlation function derived 
by Dyson \cite{D62} for suitably scaled eigenvalues of random matrices from the Circular Unitary Ensemble (CUE) or the Gaussian Unitary Ensemble (GUE). Moreover, in \cite{M73}, Montgomery proved that this conjecture holds for test functions whose Fourier transforms are supported on restricted intervals. However, in its full generality, this conjecture remains open.
Compelling evidence in support of Montgomery's conjecture comes from Odlyzko’s numerical computations of the zeros of the Riemann zeta function at heights near zero number $10^{20}$ \cite{O89},
extended to even higher heights in
\cite{Od01}. His computations of the pair correlation and nearest neighbor spacing of the zeros of the Riemann zeta function agree well with those for the GUE. 

Another close connection between number theory and random matrix theory arises in the study of the moments of the Riemann zeta function on the critical line. Number theorists are particularly interested in the asymptotic formulas for these moments, which are important both for estimating the maximal order of the zeta function and for applications to the distribution of prime numbers, through zero density estimates, as well as to divisor problems. More specifically, one seeks to determine $g_l$ such that
\begin{equation}\label{asyzeta}
\frac{1}{T}\int_0^T \left|\zeta\left(\frac{1}{2}+it\right)\right|^{2l} dt \sim c_l 
\frac{g_l}{\Gamma(1+l^2)} (\log T)^{l^2}, \quad T \to \infty,
\end{equation}  
where $c_l$ is the arithmetic factor given in terms of the particular product over primes $p$ by
\begin{equation}\label{arithmeticfactor}
c_l = \prod_{p} \left(1 - \frac{1}{p} \right)^{l^2} \sum_{m=0}^\infty \left( \frac{\Gamma(m+l)}{m! \, \Gamma(l)} \right)^2 p^{-m}.
\end{equation}

The study of \eqref{asyzeta} dates back to Hardy and Littlewood \cite{HL18}, who investigated the mean square of $\zeta(\frac{1}{2}+it)$, thereby introducing the second moment and establishing $g_1=1$. For $l=2$, Ingham \cite{I26} showed that $g_2=2$. Using the approximate functional equation and Dirichlet polynomial techniques, Conrey and Ghosh \cite{CG98} conjectured $g_3=42$, and Conrey and Gonek \cite{CG01} conjectured $g_4=24024$. For $l>4$, no reliable conjecture exists in number theory, even assuming the Riemann hypothesis. Keating and Snaith \cite{KS00} studied the moments of characteristic polynomials from CUE and established the following asymptotics:
\begin{equation}\label{asyforcue}
\int_{U(N)} |\Lambda_N(1)|^{2l} \, d\mu_{N} \sim 
\prod_{j=0}^{l-1} \frac{j!}{(j+l)!} \, N^{l^2}, \quad N\to \infty,
\end{equation}
where $U(N)$ is the group of unitary matrices, $d\mu_{N}$ is the Haar measure on $U(N)$, and $\Lambda_N(s)$ is the characteristic polynomial defined by
$
\Lambda_N(s) = \prod_{n=1}^N (1 - s e^{-i\theta_n})$,
with $e^{i\theta_1}, \ldots, e^{i\theta_N}$ being the eigenvalues of a Haar-distributed unitary matrix of size $N \times N$. Moreover, Keating and Snaith used the value distribution of this characteristic polynomial  to model that of the Riemann zeta function on the critical line and conjectured that the leading coefficient in \eqref{asyforcue} predicts that of \eqref{asyzeta} in the following way. 

\begin{conjecture}[Keating-Snaith]\label{KSconjecture}
Let $g_l$ be as given in \eqref{asyzeta}. Then
\begin{equation}\label{predicttheleadingcoefficient}
\prod_{j=0}^{l-1} \frac{j!}{(j+l)!} = \frac{g_l}{\Gamma(1+l^2)}.
\end{equation}
\end{conjecture}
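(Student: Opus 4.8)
\medskip\noindent\textbf{Towards Conjecture~\ref{KSconjecture}.} Because $g_l$ is known unconditionally only for $l=1,2$ (Hardy--Littlewood and Ingham), is merely conjectured for $l=3,4$, and is entirely out of reach of current number theory for $l\ge 5$, the plan is not to establish \eqref{predicttheleadingcoefficient} outright but to (i) verify it in every case where $g_l$ is available, and (ii) exhibit the heuristic that makes it the natural prediction, thereby isolating the single analytic ingredient that is still missing.

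First I would put the left-hand side of \eqref{predicttheleadingcoefficient} in closed form by evaluating the finite-$N$ moment. Starting from the Weyl integration formula on $U(N)$ together with a Cauchy/Selberg-type evaluation, one obtains
\begin{equation*}
\int_{U(N)} |\Lambda_N(1)|^{2l}\,d\mu_N \;=\; \prod_{j=0}^{N-1} \frac{j!\,(j+2l)!}{\bigl((j+l)!\bigr)^{2}},
\end{equation*}
which I would rewrite in terms of the Barnes $G$-function as
\begin{equation*}
\frac{G(1+N)\,G(1+N+2l)\,G(1+l)^{2}}{G(1+2l)\,G(1+N+l)^{2}}.
\end{equation*}
Feeding in the large-argument asymptotics of $\log G(1+z)$ then reproduces \eqref{asyforcue} and shows that $\lim_{N\to\infty} N^{-l^{2}}\int_{U(N)} |\Lambda_N(1)|^{2l}\,d\mu_N = \prod_{j=0}^{l-1}\tfrac{j!}{(j+l)!}$. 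Evaluating this product at $l=1,2,3,4$ yields $1,\tfrac1{12},\tfrac1{8640},\tfrac1{870912000}$, which match $g_l/\Gamma(1+l^{2})$ with $(g_l,\Gamma(1+l^{2}))=(1,1),(2,24),(42,362880),(24024,16!)$ respectively; thus \eqref{predicttheleadingcoefficient} is confirmed in every case where both sides are known.

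For (ii) I would invoke the hybrid Euler--Hadamard product heuristic: write $\zeta(\tfrac12+it)$ as a nearly statistically independent product of a short Euler product over primes $p\le X$ and a short Hadamard product over the Riemann zeros within $O(1/\log t)$ of the point. The $2l$-th moment then factorizes; the Euler-product factor supplies the arithmetic constant $c_l$ of \eqref{arithmeticfactor}, while the Hadamard factor, being modelled by $\int_{U(N)}|\Lambda_N(1)|^{2l}\,d\mu_N$ under the identification $N \leftrightarrow \log(t/2\pi)$, supplies the limiting ratio $\prod_{j=0}^{l-1}\tfrac{j!}{(j+l)!}$ once the $X$-dependence of the two pieces is reconciled. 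Combining this with the computation above and matching the power $(\log T)^{l^{2}}$ forces the leading coefficient in \eqref{asyzeta} to equal $c_l \prod_{j=0}^{l-1}\tfrac{j!}{(j+l)!}$, which is exactly the assertion \eqref{predicttheleadingcoefficient}.

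The main obstacle will be turning (ii) into a theorem: establishing the asymptotic \eqref{asyzeta} with the stated constant for any $l\ge 3$ is open and appears to need genuinely new input---control of the off-diagonal terms in the moment expansion, or an unconditional justification of the independence of the Euler and Hadamard factors---so a complete proof of Conjecture~\ref{KSconjecture} lies beyond present techniques. What one can offer in support is the internal consistency just described, together with the Odlyzko-type numerical evidence recalled above.
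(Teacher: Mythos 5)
This statement is a conjecture, and the paper offers no proof of it --- only the observation that \eqref{predicttheleadingcoefficient} is consistent with the known or conjectured values $g_1=1$, $g_2=2$, $g_3=42$, $g_4=24024$, together with a citation of the hybrid Euler--Hadamard product work of Gonek, Hughes and Keating as supporting evidence. Your proposal correctly recognises that no proof is available, and your two ingredients (the finite-$N$ CUE moment evaluation with its $N\to\infty$ limit checked against $g_l/\Gamma(1+l^2)$ for $l\le 4$, and the hybrid-product heuristic) are exactly the evidence the paper itself invokes, so this is essentially the same treatment.
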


This conjecture for $l=1,2,3,4$ coincides with the results mentioned earlier from number theory. Furthermore, the work of Gonek, Hughes, and Keating \cite{GHK07} provides additional evidence supporting this conjecture, and independent importance of the leading coefficients $g_l$ as the mean values of the moments of a truncated Hadamard product over the zeros of the Riemann zeta function.

In the philosophy of Conjecture \ref{KSconjecture}, the parameter $N$ on the random matrix side corresponds to $\log T$ on the number theory side. Hence the leading order is the same in both settings, and the leading coefficient on the random matrix side predicts that on the number theory side, up to an arithmetic constant given in (\ref{arithmeticfactor}). This suggests that the connection (\ref{predicttheleadingcoefficient}) extends to the moments of higher order of derivatives of the Riemann zeta function on the critical line, or similarly to the Hardy $\mathsf{Z}$-function, defined by
\[
\mathsf{Z}(t) =  \pi^{-\textnormal{i}t/2} \frac{\Gamma(1/4 + \textnormal{i}t/2)}{|\Gamma(1/4+\textnormal{i}t/2)|}\zeta\!\left(\tfrac{1}{2}+\textnormal{i}t\right).
\]
It satisfies $|\mathsf{Z}(t)| = \big|\zeta(\tfrac{1}{2} + \textnormal{i}t)\big|$ and $\mathsf{Z}(t) \in \mathbb{R}$ for all $t \in \mathbb{R}$.

\begin{conjecture}\label{generalKSconjecture}
Let $n_{1}>n_{2}\geq 0$ and $l\geq 1,h\geq 0$ be integers with $h\leq l$, and let $c_{l}$ be the arithmetic factor as given in (\ref{arithmeticfactor}). Then as $T\rightarrow \infty$,
\begin{equation*}\label{generalpredicttheleadingcoefficient}
\frac{1}{T}\int_0^T \left|\zeta^{(n_{1})}\!\left(\tfrac{1}{2}+it\right)\right|^{2h}\left|\zeta^{(n_{2})}\!\left(\tfrac{1}{2}+it\right)\right|^{2l-2h} dt \sim c_l 
a_{h,l}(n_{1},n_{2})(\log T)^{m_{h,l}(n_{1},n_{2})},
\end{equation*}
and 
\begin{equation}\label{generalpredicttheleadingcoefficient2}
\frac{1}{T}\int_0^T \left|\mathsf{Z}^{(n_{1})}(t)\right|^{2h}\left|\mathsf{Z}^{(n_{2})}(t)\right|^{2l-2h} dt \sim c_l 
b_{h,l}(n_{1},n_{2})(\log T)^{m_{h,l}(n_{1},n_{2})}, \end{equation}
where $a_{h,l}(n_{1},n_{2})$, $b_{h,l}(n_{1},n_{2})$ and $m_{h,l}(n_{1},n_{2})$ appear in the leading terms of the asymptotic formulas
\begin{equation}\label{generalasyforcue}
\int_{U(N)} |\Lambda_N^{(n_{1})}(1)|^{2h}|\Lambda_N^{(n_{2})}(1)|^{2l-2h} \, d\mu_{N} \sim a_{h,l}(n_{1},n_{2})N^{m_{h,l}(n_{1},n_{2})}, \quad N\to \infty,
\end{equation}
and 
\begin{equation}\label{generalasyforcue2}
\int_{U(N)} |Z_N^{(n_{1})}(1)|^{2h}|Z_N^{(n_{2})}(1)|^{2l-2h} \, d\mu_{N} \sim b_{h,l}(n_{1},n_{2})N^{m_{h,l}(n_{1},n_{2})}, \quad N\to \infty.
\end{equation}
\end{conjecture}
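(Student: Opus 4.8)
Since the final statement is a conjecture, a complete unconditional proof is not available; what \emph{is} attainable — and what I would carry out — is the exact evaluation of the two random matrix asymptotics \eqref{generalasyforcue} and \eqref{generalasyforcue2}, together with a derivation of \eqref{generalpredicttheleadingcoefficient2} (and its $\zeta$ analogue) from the Keating--Snaith philosophy combined with the Conrey--Farmer--Keating--Rubinstein--Snaith (CFKRS) shifted-moment recipe. The plan is therefore in three stages: (i) reduce the CUE averages to Toeplitz/Hankel determinants; (ii) extract their large-$N$ asymptotics via the integrable structure emphasised in the companion paper; (iii) match the resulting leading data $a_{h,l}, b_{h,l}, m_{h,l}$ against the arithmetic-times-geometric prediction of the recipe on the number-theory side.

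\emph{Random matrix side.} First I would write $\Lambda_N(s)=\prod_{n=1}^N(1-se^{-\textnormal{i}\theta_n})$ and expand $\Lambda_N^{(n_1)}(1)$, $\Lambda_N^{(n_2)}(1)$ as fixed polynomial combinations of power sums (equivalently elementary symmetric functions) of the $e^{-\textnormal{i}\theta_n}$. Then $|\Lambda_N^{(n_1)}(1)|^{2h}|\Lambda_N^{(n_2)}(1)|^{2l-2h}$ is a symmetric Laurent polynomial in the eigenvalues, so by Weyl's integration formula and Heine's identity the average over $U(N)$ becomes a Toeplitz determinant $\det[c_{j-k}]_{j,k=0}^{N-1}$ whose symbol is a derivative-weighted modification of $|1+z|^{2l}$, with merging Fisher--Hartwig-type singularities located at $z=1$ (and, after the passage to $\mathsf{Z}_N$, also at $z=-1$). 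Converting the Toeplitz determinant to the associated Hankel determinant puts the entries in terms of the modified Bessel functions $I_\nu$, which is precisely the family for which a higher-order linear ODE in an auxiliary variable is available. From that ODE (or, alternatively, from a Deift--Its--Krasovsky--type Riemann--Hilbert analysis of the symbol) one reads off the power $m_{h,l}(n_1,n_2)=l^2+2\bigl(hn_1+(l-h)n_2\bigr)$ by a degree/scaling count, and the constants $a_{h,l}, b_{h,l}$ as Barnes $G$-function ratios (of the type $G(1+l)^2/G(1+2l)$) multiplied by explicit finite combinatorial sums produced by the derivative weights. The specialisations $n_1=n_2=0$ and $h=l$ recover the results of Keating--Snaith in \eqref{asyforcue}, of Hughes, and of Conrey--Rubinstein--Snaith, serving as consistency checks.

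\emph{Number theory side.} For the $\zeta$ statement I would start from the CFKRS recipe for the shifted moment $\frac1T\int_0^T\prod_{i=1}^l\zeta(\tfrac12+\textnormal{i}t+\alpha_i)\prod_{j=1}^l\zeta(\tfrac12-\textnormal{i}t+\beta_j)\,dt$, differentiate $2h$ times with respect to the $\alpha$- and $\beta$-shifts attached to the $\zeta^{(n_1)}$ factors and $2l-2h$ times with respect to the rest, and let all shifts tend to $0$. The recipe output factorises as an arithmetic part, which collapses exactly to $c_l$ in \eqref{arithmeticfactor} under this specialisation, times a combinatorial contour-integral part that is, term for term, the $N\to\infty$ limit of the corresponding CUE average in \eqref{generalasyforcue}; this identification gives the first displayed asymptotic with $a_{h,l}(n_1,n_2)$ as in \eqref{generalasyforcue}, so that the analogue of \eqref{predicttheleadingcoefficient} holds. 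For $\mathsf{Z}(t)$ in \eqref{generalpredicttheleadingcoefficient2}, the extra unimodular gamma-factor only rotates the functional-equation phase, which in the recipe amounts to symmetrising over the $\alpha\leftrightarrow\beta$ swaps with a sign; this reproduces precisely the real, symmetrised CUE average \eqref{generalasyforcue2}, yielding $b_{h,l}(n_1,n_2)$.

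\emph{Main obstacle.} The decisive difficulty is twofold. On the number-theory side the CFKRS recipe is itself conjectural with no controlled error terms, so the two zeta/$\mathsf{Z}$ asymptotics cannot be established unconditionally beyond the ranges already accessible by approximate-functional-equation and Dirichlet-polynomial methods (essentially $l\le 2$, plus degenerate cases); for general $l$ one can only \emph{derive} the conjecture from the recipe, not prove it. On the random matrix side the genuine technical work is uniformity: one must push the ODE-based (or Riemann--Hilbert) asymptotics through the regime where several Fisher--Hartwig singularities collide at $z=1$ and at $z=-1$ and track the leading constant through that merging — this is exactly where the higher-order linear differential equation of the companion paper is indispensable. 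Thus the realistic target of a rigorous argument is the pair \eqref{generalasyforcue}--\eqref{generalasyforcue2} together with the verification that their leading data are consistent with the conjectural number-theoretic side.
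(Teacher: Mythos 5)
The statement you are asked about is a \emph{conjecture}, and the paper contains no proof of it: it is stated as the natural extension of the Keating--Snaith philosophy, with the random-matrix asymptotics \eqref{generalasyforcue} and \eqref{generalasyforcue2} taken as established facts from the cited literature (Hughes for $n_1=1$, $n_2=0$; Keating--Wei \cite{KW24a} for general $n_1,n_2$, where $m_{h,l}(n_1,n_2)=l^2+2hn_1+2(l-h)n_2$ is proved and $a_{h,l}$, $b_{h,l}$ are given as partition sums). Your proposal correctly diagnoses exactly this situation --- only the CUE side is provable, and the $\zeta$/$\mathsf{Z}$ side can at best be \emph{derived} from the conjectural CFKRS recipe --- so there is no gap in your assessment of what a ``proof'' can mean here, and your account of the number-theory side matches how the paper and its references treat the statement.

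On the provable random-matrix side, however, your route differs from the one the paper relies on. You propose extracting the large-$N$ asymptotics directly from a Toeplitz determinant with merging Fisher--Hartwig singularities at $z=\pm 1$ via Riemann--Hilbert analysis. The mechanism actually used in \cite{CRS06}, \cite{KW24a}, \cite{KW24b} and the companion paper \cite{FW25} is different: after rescaling, the $N\to\infty$ limit of the joint moments is expressed in terms of derivatives at $x=0$ of the \emph{fixed-size} $l\times l$ Hankel determinant \eqref{bessel} (equivalently the $U(l)$ integral \eqref{unitaryintegral}), so the constants $a_{h,l}(n_1,n_2)$, $b_{h,l}(n_1,n_2)$ come from small-$x$ Taylor coefficients of an $l\times l$ determinant in an auxiliary variable, not from large-$N$ Toeplitz asymptotics; the higher-order linear ODE of \cite{KW24b} and \cite{FW25} is used to generate those Taylor coefficients recursively, not to control a merging-singularity regime. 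Both routes are legitimate in principle, but the paper's route sidesteps the uniformity problem you flag as the main obstacle, at the price of first establishing the exchange of limits that reduces the $N\to\infty$ asymptotics to the finite determinant --- a step your outline does not address and which is where the real work in \cite{KW24a} lies.
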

Here $f^{(n)}(x)$ denotes $\tfrac{d^{n}f}{dx^{n}}(x)$, and $Z_{N}(s)$ is the analogue of Hardy's $\mathsf{Z}$-function, defined by
\begin{equation}
Z_{N}(s) := e^{-\pi i N/2}e^{i \sum_{n=1}^{N}\theta_{n}/2}
s^{- N/2}\Lambda_{N}(s).
\end{equation}

 For $n_{2}=0$ and $h=0$, Conjecture \ref{generalKSconjecture} reduces to the Keating–Snaith conjecture. For $n_{1}=1,n_{2}=0$, it was proposed in Hughes’ thesis \cite{Hu01}, where he also proved that $m_{h,l}(1,0)=l^2+2h$ and gave explicit formulas for $a_{h,l}(1,0)$ in (\ref{generalasyforcue}) and $b_{h,l}(1,0)$ in (\ref{generalasyforcue2}). We also refer, for example, to \cite{ABGS21}, \cite{AKW22},  \cite{B+19}, \cite{BW25}, \cite{CRS06}, \cite{De08}, \cite{FW06}, \cite{SW24} and \cite{Wi12} for alternative expressions for for $a_{h,l}(1,0)$ and $b_{h,l}(1,0)$. For general $n \geq 2$, Keating and Wei \cite{KW24a} showed that $m_{h,l}(n_{1},n_{2})=l^2+2hn_{1}+2(l-h)n_{2}$, and expressed $a_{h,l}(n_{1},n_{2})$ and $b_{h,l}(n_{1},n_{2})$ in terms of certain partition sums. 
It is proved that both $a_{h,l}(n_{1},n_{2})$ in (\ref{generalasyforcue}) and $b_{h,l}(n_{1},n_{2})$ in (\ref{generalasyforcue2}) can be expressed in terms of derivatives of
the Hankel determinant
\begin{equation}\label{bessel}
\det \Big [ I_{j+k+1}(2\sqrt{x}) \Big ]_{j,k=0,\dots,l-1}
\end{equation}
evaluated at $0$; see \cite{CRS06} for $n_{1}=1$, $n_{2}=0$ and \cite{KW24a} for general $n_{1}$, $n_{2}$. A relationship to a $\sigma$-Painlev\'e III$'$ transcendent and thus an integrable system was used in \cite{FW06} to systematically compute $\{a_{h,l}(1,0)\}, \{b_{h,l}(1,0)\}$, thereby extending the $n_{1}=1$, $n_{2}=0$ tabulations of
\cite{CRS06}. However, it was observed in \cite{KW24b} that beyond the order of the power series of (\ref{bessel}) needed for the $(n_{1},n_{2})=(1,0)$ application, the $\sigma$-Painlev\'e III$'$ equation does not possess a unique solution. Moreover, in \cite{KW24b}, to circumvent this obstruction an alternative integral structure was isolated, namely that of a higher order linear differential equation. To put this alternative structure in a wider context 
is the main purpose of our paper \cite{FW25}.

Crucial for this purpose is the rewrite of the Hankel determinant (\ref{bessel}) as a unitary matrix integral, namely,
\begin{equation}\label{unitaryintegral}
\Big \langle (\det U)^{l} e^{z {\rm Tr}(U + U^\dagger)} \Big \rangle_{U(l)} =
(-1)^{l(l-1)/2} \det \Big [ I_{j+k+1}(2z) \Big ]_{j,k=0,\dots,l-1}.
\end{equation}
The matrix integral written in terms of its eigenvalues can be identified, via a limiting procedure, as belonging to the Selberg correlation integral class with a single independent variable. In this setting beginning with the pioneering work of Davis \cite{Da68}, and further developed in the work of Forrester and Rains \cite{FR12}, there is a systematic way to deduce an $(l+1) \times (l+1)$ matrix differential equation, from which a higher order scalar differential equation follows. Moreover, this strategy applies equally to the generalization of the matrix integral as being over the circular $\beta$ ensemble, and with the power of the factor of the determinant a free parameter.

As an application distinct from the number theory motivation, it was noted from
 \cite{Ge90}, \cite{Ra98} that 
 \begin{equation}\label{l=0}
\Big \langle e^{z {\rm Tr} (U + U^\dagger)} \Big \rangle_{U(l)} =
1 + \sum_{N=1}^\infty \frac{T_l(N)}{(N!)^2} z^{2N},
\end{equation}
where $T_{l}(N)$ counts the number of permutations of $\{1, 2, \ldots, N \}$ whose longest increasing subsequence has length at most $l$. Our results then provide an alternative method to that of \cite{BG00} for computing the scalar linear differential equation of (\ref{l=0}) of degree $l+1$ for $l=2,3,\ldots,7$, and extend this computation to general larger values of $l$. Furthermore, we obtain an efficient recursive algorithm to compute the coefficients of $T_{l}(N)$ for arbitrary $N$ and $l$.

In the number theory context of
(\ref{unitaryintegral}), following \cite{KW24b}
we use the matrix linear differential formulation to deduce a recursive relation for the coefficients of the Taylor expansion of (\ref{bessel}) at $x=0$ for any $l$. This allows for the efficient computation of $a_{h,l}(n_{1},n_{2})$ and $b_{h,l}(n_{1},n_{2})$ involved in Conjecture \ref{generalKSconjecture} for large $h,l$ and any fixed $n_{1},n_{2}$.
The availability of such data for $a_{h,l}(n_{1},n_{2})$ and $b_{h,l}(n_{1},n_{2})$ is not only of computational interest, but also useful in problems in number theory concerning large gaps between zeros of higher-order derivatives of Hardy’s $\mathsf{Z}$-function. 

To illustrate this, let
\[
\Theta^{(m)} = \limsup_{n\rightarrow \infty} \frac{t_{n+1}^{(m)} - t_n^{(m)}}{2\pi / \log t_n^{(m)}},
\] 
where $\{t_n^{(m)}\}$ is the sequence of non-negative zeros of $\mathsf{Z}^{(m)}(t)$, counted according to multiplicity.
For $m=0$, the quantity $\Theta^{(0)}$ was introduced by Hall \cite{Hall99}. Under the Riemann hypothesis, 
\[
\Theta^{(0)} = \Theta:= \limsup_{n\rightarrow \infty} \frac{\gamma_{n+1} - \gamma_n }{2\pi / \log \gamma_n},
\]
where $\{\gamma_n\}$ is the non-decreasing sequence of imaginary parts of the zeros of the Riemann zeta function in the upper half-plane, counted with multiplicity. The quantity $\Theta$ was introduced by Selberg \cite{S46}, who proved that $\Theta > 1$.  
Note that $\Theta$ and $\Theta^{(m)}$ have an average value of 1 under the Riemann hypothesis. This has motivated number theorists to investigate how large these quantities can be.
Conrey, Ghosh, and Gonek showed that $\Theta > 2.337$ under the Riemann hypothesis \cite{CGG84} and $\Theta> 2.68$ under the Generalized Riemann hypothesis \cite{CGG86}. In \cite{Hall04}, Hall did not assume the Riemann hypothesis, but instead assumed the truth of Conjecture \ref{generalKSconjecture} for $n_{1}=1$, $n_{2}=0$, $l=6$, $0\leq h\leq 6$, and used the explicit values of $b_{h,6}(1,0)$, then proved that $\Theta^{(0)} > 4.29$ times the average gap length. This demonstrates the usefulness of having precise data for $a_{h,l}(n_{1},n_{2})$ and $b_{h,l}(n_{1},n_{2})$. 

It is widely conjectured that $\Theta^{(0)} = \infty$, though this remains open even under GRH. While no direct connection is known between the Riemann hypothesis and Conjecture \ref{generalKSconjecture}, the availability of explicit data for $b_{h,l}(n_{1},n_{2})$ provides a potential way to push lower bounds further. Indeed, recent work of Bui and Hall \cite{BH23} obtained new lower bounds for $\Theta^{(m)}$ for certain $m \geq 1$. This suggests that extending Hall’s method in \cite{Hall04}, together with data for $b_{h,l}(n_{1},n_{2})$ in (\ref{generalasyforcue2}), could yield sharper results, assuming Conjecture \ref{generalKSconjecture} for these $b_{h,l}(n_{1},n_{2})$. In what follows, we establish two new results illustrating how the above suggestion can be implemented.
\begin{theorem}\label{largegap1}
Assume that the asymptotic formula \eqref{generalpredicttheleadingcoefficient2}
in Conjecture \ref{generalKSconjecture} holds for $n_{1}=2$, $n_{2}=1$, and for
all integers $h$ with $0 \leq h \leq 4$. Then
\begin{equation}
\Theta^{(1)} > 1.98.
\end{equation}
\end{theorem}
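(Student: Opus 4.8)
The plan is to adapt the Wirtinger-type variational method of Hall \cite{Hall04} (and its extension to derivatives by Bui and Hall \cite{BH23}) from $\mathsf{Z}$ to $\mathsf{Z}^{(1)}$, now feeding in the eighth-moment ($l=4$) data. Suppose for contradiction that $\Theta^{(1)}\le\lambda$ for some $\lambda<1.98$, so that for all large $n$ one has $t^{(1)}_{n+1}-t^{(1)}_n\le\lambda\,\frac{2\pi}{\log t^{(1)}_n}$. Recall also that the zero counting function $N_1(T):=\#\{0<t\le T:\mathsf{Z}^{(1)}(t)=0\}$ satisfies $N_1(T)=\tfrac{T}{2\pi}\log T+O(T)$, so the mean gap of $\mathsf{Z}^{(1)}$ is $2\pi/\log T$, matching the normalisation in Conjecture \ref{generalKSconjecture}.

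\emph{Local inequality.} On each interval $I_n=[t^{(1)}_n,t^{(1)}_{n+1}]$ the function $\mathsf{Z}^{(1)}$ vanishes at both endpoints and has constant sign. For a finite vector $\mathbf a$ of real auxiliary parameters one forms a trial function out of $\mathsf{Z}^{(1)}$ and $\mathsf{Z}^{(2)}$, applies a Wirtinger-type inequality on $I_n$, and --- integrating by parts and discarding boundary contributions, which vanish because $\mathsf{Z}^{(1)}$ vanishes at the ends of $I_n$ --- rewrites the result purely in terms of
\[
\int_{I_n}\big(\mathsf{Z}^{(1)}(t)\big)^{2(l-h)}\big(\mathsf{Z}^{(2)}(t)\big)^{2h}\,dt,\qquad 0\le h\le l\le 4,
\]
each multiplied by an appropriate power of $|I_n|=t^{(1)}_{n+1}-t^{(1)}_n$. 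This yields, on every $I_n$, an inequality $\mathcal Q_n(\mathbf a)\ge 0$ quadratic in $\mathbf a$.

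\emph{Globalisation.} Summing $\mathcal Q_n(\mathbf a)\ge 0$ over $t^{(1)}_n\le T$, inserting $|I_n|\le\lambda\,\tfrac{2\pi}{\log t^{(1)}_n}$ in those terms where this keeps the power of $\log T$ aligned, letting $T\to\infty$, and invoking \eqref{generalpredicttheleadingcoefficient2} and \eqref{generalasyforcue2} for $n_1=2$, $n_2=1$, $l=4$, $0\le h\le 4$, one obtains a limiting inequality. Because every moment that contributes at leading order carries the same $l=4$, the arithmetic factor $c_4$ cancels, and what remains is a single quadratic form $Q_\lambda(\mathbf a)\ge 0$, required to hold for all real $\mathbf a$, whose coefficients are explicit polynomials in $\lambda$ and in the random-matrix constants $b_{h,4}(2,1)$, $0\le h\le 4$. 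If $Q_\lambda$ fails to be positive semidefinite for some admissible $\lambda$, this is the desired contradiction, so $\Theta^{(1)}$ is at least the least $\lambda$ for which $Q_\lambda$ is positive semidefinite.

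\emph{Numerics and the main obstacle.} The constants $b_{h,4}(2,1)$ are read off from \eqref{generalasyforcue2}, equivalently from the Taylor coefficients at $0$ of the Hankel determinant \eqref{bessel}, via the higher-order linear differential equation and its coefficient recursion from \cite{FW25}. With these rational numbers in hand one carries out a finite-dimensional optimisation over $\mathbf a$ and checks that the threshold $\lambda$ exceeds $1.98$, which proves Theorem \ref{largegap1}. The main obstacle is the \emph{design of the trial function and of the integration by parts} so that precisely the moments $b_{h,4}(2,1)$ with $0\le h\le 4$ survive in the limit and no moment involving $\mathsf{Z}^{(3)}$ or a higher derivative appears at the leading power of $\log T$; one must also show that all genuinely lower-order contributions --- from polynomial weights, from the $O(T)$ term in $N_1(T)$, from exceptional (multiple or clustered) zeros of $\mathsf{Z}^{(1)}$, and from the $o(1)$ in the conjectural asymptotics --- are negligible after normalisation. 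Once this bookkeeping is secured, extracting the numerical value $1.98$ from the \cite{FW25} data is a finite, if delicate, computation.
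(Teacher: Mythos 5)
Your overall strategy --- a Wirtinger-type inequality on each gap of $\mathsf{Z}^{(1)}$, globalised via the conjectural joint moments of $\mathsf{Z}^{(1)}$ and $\mathsf{Z}^{(2)}$ with $l=4$ --- is the right one in spirit, but the mechanism you describe is not the one that produces the bound, and it is doubtful it could. You reduce everything to positive semidefiniteness of a quadratic form $Q_\lambda(\mathbf a)$ in auxiliary parameters; that is the structure of a second-moment ($l=1$) Wirtinger argument, where cross terms $a_h a_{h'}$ pair off against $L^2$ inner products. The argument actually needed is Hall's \emph{generalized} Wirtinger inequality (Lemma~\ref{inequality}): one chooses an even convex $H(u)=\sum_{h=1}^{l}\frac{2l-1}{2h-1}\binom{l}{h}v_h u^{2h}$ and obtains $\int H(y'/y)\,y^{2l}\ge(2l-1)\lambda\int y^{2l}$, where $\lambda=\lambda(v_1,\dots,v_{l-1})$ is defined \emph{implicitly and nonlinearly} by the integral equation \eqref{eq:lambda}, equivalently \eqref{eq:logform}. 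The resulting optimisation (Lemma~\ref{optimization}) is to maximise the positive root $\mathcal{X}$ of the polynomial equation \eqref{Xeq}, in which the $v_h$ enter linearly on the left but through the transcendental $\lambda(v_1,\dots,v_{l-1})$ on the right; this is not a semidefiniteness problem. Moreover, a quadratic trial function built from $\mathsf{Z}^{(1)}$ and $\mathsf{Z}^{(2)}$ would generate odd-power joint moments $\int(\mathsf{Z}^{(1)})^{2l-h-h'}(\mathsf{Z}^{(2)})^{h+h'}$ for which no conjectural asymptotics are assumed, whereas Hall's inequality produces exactly the even-power moments covered by \eqref{generalpredicttheleadingcoefficient2}. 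One must also verify $v_1,v_2,v_3>0$ so that $G$ is increasing on $\mathbb{R}_+$, which is what licenses replacing $|I_n|$ by the hypothesised maximal gap $2\pi\kappa/\log T$; your sketch omits this check.

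Second, everything quantitative is deferred. You do not produce the constants $b_{h,4}(2,1)$ (the paper computes them from \cite[Theorem 24]{KW24a}; see Table~\ref{tab:bhln}), you do not solve the optimisation (the paper uses Hall's closed form $\mathcal{X}=\frac{AB(C-D)+C(B^2+CD-2BD)}{16(B-C)(C-D)}$ in the ratios $A=A(1,4;1),\dots,D=A(4,4;1)$, giving $\mathcal{X}=3.93116\ldots$ and hence $\Theta^{(1)}\ge\sqrt{\mathcal{X}}>1.98$), and you explicitly flag the design of the trial function and the error bookkeeping as unresolved obstacles. As it stands the proposal asserts, rather than derives, that the threshold exceeds $1.98$; the entire numerical content of the theorem is missing.
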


\begin{theorem}\label{largegap2}
Assume that the asymptotic formula \eqref{generalpredicttheleadingcoefficient2}
in Conjecture \ref{generalKSconjecture} holds for $n_{1}=3$, $n_{2}=2$, and for
all integers $h$ with $0 \leq h \leq 4$. Then
\begin{equation}
\Theta^{(2)} > 1.71.
\end{equation}
\end{theorem}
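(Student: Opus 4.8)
The plan is to adapt Hall's second-moment (mollifier) method from \cite{Hall04}, now applied to the derivative $\mathsf{Z}^{(m)}(t)$ rather than to $\mathsf{Z}(t)$ itself. The core idea is a standard one: to show that consecutive zeros of $\mathsf{Z}^{(m)}$ can be far apart, one produces a smooth, nonnegative weight $w(t)$ concentrated near a point $t_0$ together with a trigonometric-type kernel, and one compares a weighted mean value of $|\mathsf{Z}^{(m)}(t)|^2$ against a weighted mean value of $\bigl(\mathsf{Z}^{(m)}(t)\bigr)^2$ multiplied by $\cos^2$ of a suitable linear phase; if over an interval of length $\kappa \cdot 2\pi/\log t_0$ one can show the ``wrong-sign'' integral dominates, then $\mathsf{Z}^{(m)}$ must have changed sign, forcing $\Theta^{(m)} > \kappa$. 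Concretely, I would set up the ratio

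\begin{equation}\label{planratio}
\mathcal{R}(\kappa) = \frac{\displaystyle\int_{-\infty}^{\infty} \Bigl(\mathsf{Z}^{(m)}(t_0+u)\Bigr)^2 \cos^2\!\Bigl(\tfrac{\kappa}{2}u\log t_0\Bigr)\, w(u)\, du}{\displaystyle\int_{-\infty}^{\infty} \Bigl(\mathsf{Z}^{(m)}(t_0+u)\Bigr)^2 w(u)\, du},
\end{equation}

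and the mechanism is that if $\mathcal{R}(\kappa) < \tfrac14$ (roughly — the exact threshold comes from the Wirtinger-type inequality Hall uses) while simultaneously $\mathsf{Z}^{(m)}$ has no zero in the relevant window, one reaches a contradiction; optimizing over $\kappa$ and the free parameters in $w$ gives the stated numerical bounds $1.98$ and $1.71$.

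The key steps, in order, would be: (i) express the relevant weighted integrals of $\bigl|\mathsf{Z}^{(m)}(t)\bigr|^2$ and of $\mathsf{Z}^{(m)}(t)\,\overline{\mathsf{Z}^{(m)}(t)}$-type products against polynomial-in-$\log t$ mollifiers as finite linear combinations of the joint-derivative moments appearing on the left of \eqref{generalpredicttheleadingcoefficient2}; this is where the hypothesis enters, and it is why the theorems require \eqref{generalpredicttheleadingcoefficient2} for the specific pairs $(n_1,n_2)=(2,1)$ and $(3,2)$ and for all $0\le h\le 4$ — a degree-$4$ mollifier in each factor is what Hall's optimization needs, producing products $|\mathsf{Z}^{(n_1)}|^{2h}|\mathsf{Z}^{(n_2)}|^{2l-2h}$ with $l=4$. (ii) Substitute the CUE predictions: by \eqref{generalasyforcue2} together with the Keating--Wei evaluation $m_{h,l}(n_1,n_2) = l^2 + 2hn_1 + 2(l-h)n_2$ and the expression of $b_{h,l}(n_1,n_2)$ via derivatives at $0$ of the Bessel-Hankel determinant \eqref{bessel}, obtain explicit rational (times $c_l$) values for each moment; the arithmetic factor $c_l$ cancels in the ratio $\mathcal{R}(\kappa)$. (iii) Assemble $\mathcal{R}(\kappa)$ as an explicit ratio of polynomials in $\kappa$ whose coefficients are built from the $b_{h,4}(n_1,n_2)$, differentiate to find the optimal mollifier coefficients (a finite-dimensional quadratic optimization, as in \cite{Hall04}), and read off the largest $\kappa$ for which the sign-change argument goes through. (iv) Verify the analytic estimates needed to pass from the mean-value asymptotics to a genuine statement about an individual gap — i.e.\ that the error terms are uniform enough and that the $\limsup$ in the definition of $\Theta^{(m)}$ is correctly extracted; this is the part where one leans on the literature (Hall \cite{Hall04}, Bui--Hall \cite{BH23}) essentially verbatim, since the derivative $\mathsf{Z}^{(m)}$ behaves, for the purposes of these bounds, like $\mathsf{Z}$ with shifted moment exponents.

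The main obstacle I anticipate is step (iii) combined with the bookkeeping in step (i): for derivatives of order $n_1 = 2, 3$ the change of variables relating $\mathsf{Z}^{(m)}(t)$ to $\zeta$-derivatives and the Hardy $\Theta$-phase introduces lower-order derivative terms, so the weighted second moment of $\mathsf{Z}^{(m)}$ is not simply one of the $b_{h,l}$ but a specific linear combination of the $b_{h,l}(n_1,n_2)$ over several $(n_1,n_2)$ pairs with $n_1 \le m+1$ (and correspondingly $\zeta$-side terms via the $\mathsf{Z}$-to-$\zeta$ dictionary); getting the exact combinatorial coefficients right — and checking that the hypotheses as stated (only the pairs $(2,1)$ and $(3,2)$) genuinely suffice, with the other needed moments being either classically known or reducible — is the delicate accounting on which the numerical values $1.98$ and $1.71$ depend. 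A secondary obstacle is confirming that Hall's Wirtinger/Fejér-kernel inequality, which in the $m=0$ case exploits positivity of $|\mathsf{Z}(t)|^2$, transfers without loss to $|\mathsf{Z}^{(m)}(t)|^2$; it does, since that function is also a nonnegative real square, but the optimal constant in the associated inequality may shift and must be recomputed, which is precisely why $\Theta^{(1)} > 1.98$ and $\Theta^{(2)} > 1.71$ fall short of Hall's $\Theta^{(0)} > 4.29$.
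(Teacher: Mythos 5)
Your proposal does not match the mechanism the paper actually uses, and as written it contains a genuine gap. The paper's proof is \emph{not} a mollifier/second-moment argument with a $\cos^2$ kernel and a ratio threshold. It rests on Hall's \emph{generalized} Wirtinger inequality (Lemma~\ref{inequality}, from \cite{Hall02}): for $y\in C^2[0,\pi]$ vanishing at the endpoints and $H$ an even degree-$2l$ polynomial as in \eqref{eq:Hspecial}, one has \eqref{eq:GW}, an inequality between $\int H(y'/y)\,y^{2l}$ and $\lambda\int y^{2l}$. Taking $y=\mathsf{Z}^{(n)}$ on each interval between consecutive zeros of $\mathsf{Z}^{(n)}$ in $[T,2T]$, rescaling the interval to $[0,\pi]$, and summing over all gaps produces \emph{exactly} the joint moments $\int_T^{2T}|\mathsf{Z}^{(n+1)}|^{2h}|\mathsf{Z}^{(n)}|^{2l-2h}$ for the single derivative pair $(n+1,n)$ and $0\le h\le l$ with $l=4$. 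Substituting the conjectured asymptotics \eqref{defofd} (equivalently \eqref{generalpredicttheleadingcoefficient2} with the CUE data of Lemma~\ref{Table}) yields the polynomial equation \eqref{Xeq} and the bound $(\Theta^{(n)})^2\ge\mathcal{X}$; the numbers $1.98$ and $1.71$ come from Hall's explicit optimization of $\mathcal{X}$ over $v_1,v_2,v_3$. The power $l=4$ enters through the exponent $y^{2l}$ in the Wirtinger inequality, not through ``a degree-$4$ mollifier in each factor''; a second-moment ratio $\mathcal{R}(\kappa)$ built only from $(\mathsf{Z}^{(m)})^2$ has no way to consume the eighth-moment data $b_{h,4}(n+1,n)$, and the patch you propose for this is not a coherent mechanism.

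Your anticipated ``main obstacle'' is also a misdiagnosis that points to the same structural issue. Because the Wirtinger inequality involves only $y$ and $y'$, choosing $y=\mathsf{Z}^{(n)}$ means the \emph{only} moments needed are those of $\mathsf{Z}^{(n+1)}$ and $\mathsf{Z}^{(n)}$ jointly --- precisely the pairs $(2,1)$ and $(3,2)$ assumed in the theorems. There is no $\mathsf{Z}$-to-$\zeta$ dictionary, no lower-order derivative contamination, and no linear combination over several $(n_1,n_2)$ pairs: the hypothesis \eqref{generalpredicttheleadingcoefficient2} is stated directly for $\mathsf{Z}$-derivatives, so it plugs straight in. Finally, the reason the bounds fall short of Hall's $\Theta^{(0)}>4.29$ is not a recomputed Wirtinger constant (the inequality is unchanged); it is that the moment data $d_{h,l}(n+1,n)$ for $n=1,2$ produce a smaller optimal $\mathcal{X}$, and that Hall's $4.29$ used $l=6$ while the present theorems use $l=4$. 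To repair your write-up you would need to discard the $\cos^2$/ratio framework entirely and carry out the gap-by-gap application of \eqref{eq:GW} followed by the optimization of \eqref{Xeq}.
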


As far as we know, the two lower bounds provided in Theorems~\ref{largegap1} and~\ref{largegap2} are sharper than the previously known bounds. Conrey and Ghosh~\cite{CG} showed that $\Theta^{(1)} > 1.4$ under the Riemann Hypothesis, while Hall later obtained the unconditional bound $\Theta^{(1)} > 1.5462$~\cite{Hall04}. More recently, Bui and Hall~\cite{BH23} proved unconditionally that $\Theta^{(1)} > 1.9$ and $\Theta^{(2)} > 1.606$. Under the assumptions of Theorems~\ref{largegap1} and~\ref{largegap2}, we obtain strictly larger lower bounds than all of these earlier results.

\section{Proofs of Theorems \ref{largegap1} and \ref{largegap2}}

We start with the generalized Wirtinger inequality given in \cite[Theorem 2]{Hall02}.

\begin{lemma}\label{inequality}
Suppose that $l \in \mathbb{N}$, $y(x) \in C^{2}[0,\pi]$, and
\[
y(0)=y(\pi)=0.
\]
Let $H(u)$ be an even function, increasing and strictly convex on $\mathbb{R}_{+}$, such that
\[
H(0)=H'(0)=0,
\qquad
\text{and } \; u H''(u) \to 0 \quad \text{as } u \to 0.
\]
Then
\begin{equation}\label{eq:GW}
\int_{0}^{\pi} H\!\left(\frac{y'(x)}{y(x)}\right) y(x)^{2l}\, dx
\;\ge\;
(2l-1)\lambda
\int_{0}^{\pi} y(x)^{2l}\, dx,
\end{equation}
where $\lambda=\lambda(l,H)$ is determined by the equation
\begin{equation}\label{eq:lambda}
\int_{0}^{\infty}
\frac{G'(u)}{G(u)+(2l-1)\lambda}\,
\frac{du}{u}
= l\pi,
\qquad
G(u):=uH'(u)-H(u).
\end{equation}
In particular, in the case
\begin{equation}\label{eq:Hspecial}
H(u):=\sum_{h=1}^{l}
\frac{2l-1}{2h-1}
\binom{l}{h} v_h u^{2h},
\qquad
v_h \ge 0,\quad v_l=1,
\end{equation}
equation \eqref{eq:lambda} becomes
\begin{equation}\label{eq:speciallambda}
\int_{-\infty}^{\infty}
\frac{
u^{2l-2}
+ \binom{l-1}{l-2} v_{l-1} u^{2l-4}
+ \binom{l-1}{l-3} v_{l-2} u^{2l-6}
+ \cdots + v_1
}{
u^{2l}
+ \binom{l}{l-1} v_{l-1} u^{2l-2}
+ \cdots + \binom{l}{1} v_1 u^{2}
+ \lambda
}
\, du
= \pi.
\end{equation}
We may rewrite \eqref{eq:lambda} in the form
\begin{equation}\label{eq:logform}
\int_{0}^{\infty}
\log\!\left(1+\frac{G(u)}{(2l-1)\lambda}\right)
\frac{du}{u^{2}}
= l\pi.
\end{equation}
From this it is evident that $\lambda$ is an increasing function of $G$
(in the obvious sense).
Thus, in the special case \eqref{eq:Hspecial},
$
\lambda=\lambda(v_1,v_2,\dots,v_{l-1})
$
is an increasing function of each $v_j$.
The inequality is sharp.
\end{lemma}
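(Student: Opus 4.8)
The plan is to establish \eqref{eq:GW} by the classical calibration (``field'') method of the calculus of variations, which at the same time produces the sharp constant and the extremal. Two harmless reductions come first. If the left-hand side of \eqref{eq:GW} is infinite there is nothing to prove, so assume it is finite; this forces $y$ to vanish at $0$ and $\pi$ rapidly enough that the boundary terms appearing below will vanish. And one may assume $y>0$ on $(0,\pi)$: the argument below works verbatim on any interval, with its length taking the role of $\pi$ in \eqref{eq:lambda}, so if $y$ has an interior zero one runs it on each piece on which $y$ keeps one sign; since \eqref{eq:lambda} makes $\lambda$ a decreasing function of the interval length, each piece contributes at least $(2l-1)\lambda$ times its share of $\int_0^\pi y^{2l}\,dx$, and these add up to \eqref{eq:GW}. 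On an interval where $y>0$ the logarithmic derivative $u(x):=y'(x)/y(x)$ is well defined.

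The core of the proof is the calibrating field $a(x)$, defined as the solution of the autonomous ODE
\[
\frac{da}{dx}\;=\;-\,\frac{2l\bigl(G(a)+(2l-1)\lambda\bigr)}{H''(a)},\qquad a(\pi/2)=0 ,
\]
with $G$ as in \eqref{eq:lambda}. Since $H$ --- and hence $H''$ and $G$ --- is even, this solution is odd about $x=\pi/2$ and decreases from $+\infty$ at $x\to0^+$ to $-\infty$ at $x\to\pi^-$; separating variables and using $G'(u)=uH''(u)$, the total ``time'' of the trajectory equals $\frac{1}{l}\int_0^\infty \frac{H''(u)}{G(u)+(2l-1)\lambda}\,du$, which is exactly $\pi$ precisely when $\lambda$ solves \eqref{eq:lambda}. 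Along the way one checks that \eqref{eq:lambda} does have a unique positive root: the integral is continuous and strictly decreasing in $\lambda$, tends to $0$ as $\lambda\to\infty$, and diverges as $\lambda\to0^+$ (the integrand then blowing up like $u^{-2}$ at the origin, using $G(0)=G'(0)=0$, the second of which is the hypothesis $uH''(u)\to0$). With $a$ fixed, convexity of $H$ gives the pointwise bound $H(u)\ge H(a)+H'(a)(u-a)$, with equality iff $u=a$. Multiplying by $y^{2l}\ge0$, integrating over $[0,\pi]$, rewriting $u\,y^{2l}=\frac{1}{2l}(y^{2l})'$, and integrating that term by parts --- the boundary term $\frac{1}{2l}[H'(a)\,y^{2l}]_0^\pi$ vanishing because, by the first reduction, $y$ decays at the endpoints fast enough to overcome the at most polynomial growth of $H'(a)$ along the field --- one is left with $\int_0^\pi\bigl\{H(a)-aH'(a)-\frac{1}{2l}H''(a)a'\bigr\}y^{2l}\,dx$. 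Substituting the defining ODE for $a$ collapses the brace identically to $(2l-1)\lambda$, which is \eqref{eq:GW}. Equality forces $u\equiv a$, i.e. $y=\mathrm{const}\cdot\exp\!\int a$, a genuine admissible $C^2$ extremal (for $H(u)=u^2$, $l=1$ this is $\lambda=1$ and $y=\sin x$), so the inequality is sharp.

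The remaining assertions are direct computations. For the special $H$ of \eqref{eq:Hspecial} one gets $G(u)=uH'(u)-H(u)=(2l-1)\sum_{h=1}^l\binom{l}{h} v_h u^{2h}$ and, using $2h\binom{l}{h}=2l\binom{l-1}{h-1}$, $G'(u)/u=H''(u)=2l(2l-1)\sum_{j=0}^{l-1}\binom{l-1}{j} v_{j+1} u^{2j}$; inserting these into \eqref{eq:lambda}, cancelling the common factor $2l-1$, and passing to $\int_{-\infty}^{\infty}$ by evenness reproduces \eqref{eq:speciallambda} monomial by monomial (with $v_l=1$ giving the leading terms). Next, $\frac{G'(u)}{G(u)+(2l-1)\lambda}=\frac{d}{du}\log\!\bigl(1+\frac{G(u)}{(2l-1)\lambda}\bigr)$, so integrating \eqref{eq:lambda} against $du/u$ by parts turns $1/u$ into $1/u^2$ and gives \eqref{eq:logform}, the boundary terms vanishing at $\infty$ (a logarithm against a polynomial) and at $0$ (since $G(u)/u\to0$ as $u\to0$, again a consequence of $uH''(u)\to0$). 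From \eqref{eq:logform} the left-hand side manifestly increases if $G$ increases pointwise on $(0,\infty)$, so its root $\lambda$ increases with $G$; and in the special case $\partial G/\partial v_j=(2l-1)\binom{l}{j} u^{2j}>0$ for $u>0$, so $\lambda$ is increasing in each $v_j$.

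The step I expect to be the main obstacle is the analytic bookkeeping at the two endpoints: verifying that the field $a(x)$ escapes to $\pm\infty$ exactly as $x\to0^+,\pi^-$ and at an integrable rate --- so that the ``time $=\pi$'' identity, hence \eqref{eq:lambda}, is legitimate --- and that the integrated boundary term really does vanish for every $y$ making the left side of \eqref{eq:GW} finite. This is precisely where the normalizations $H(0)=H'(0)=0$, the decay $uH''(u)\to0$, the strict convexity of $H$, and (for the special $H$) the restriction $h\le l$ are all used, and where the preliminary reductions to $y>0$ and to a finite left-hand side earn their keep.
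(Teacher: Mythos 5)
The paper does not prove this lemma at all: it is imported verbatim as ``the generalized Wirtinger inequality given in [Theorem 2]'' of \cite{Hall02}, so there is no internal proof to compare against. Your calibration (field) argument is essentially the standard route to this inequality --- and, in substance, Hall's own --- and its core mechanics check out: the tangent-line bound $H(u)\ge H(a)+H'(a)(u-a)$ from convexity, the rewriting $uy^{2l}=\tfrac1{2l}(y^{2l})'$ and integration by parts, the collapse of the resulting brace to $(2l-1)\lambda$ via the defining ODE for $a$, the traverse-time computation (using $G'(u)=uH''(u)$) that identifies the admissible $\lambda$ with the root of \eqref{eq:lambda}, the reduction to sign-definite subintervals via monotonicity of $\lambda$ in the interval length, and the algebra leading to \eqref{eq:speciallambda} and \eqref{eq:logform} are all correct. (A sanity check: for $H(u)=u^2$ one gets $(2l-1)\lambda=1/l^2$ and \eqref{eq:GW} is Wirtinger's inequality applied to $y^l$.)

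Two points, however, are asserted rather than proved, and one of them is wrong as stated. First, the boundary term $\tfrac1{2l}[H'(a)y^{2l}]_0^{\pi}$: your justification rests on ``the at most polynomial growth of $H'(a)$ along the field'', which has no basis for a general even, increasing, strictly convex $H$ --- nothing in the hypotheses prevents $H'$ from growing superpolynomially, in which case $H'(a(x))y(x)^{2l}$ need not tend to $0$ even when the left side of \eqref{eq:GW} is finite. The estimate is fine for the polynomial $H$ of \eqref{eq:Hspecial} (there $a(x)\sim\kappa/x$, $H'(a(x))=O(x^{-(2l-1)})$, and $y^{2l}=O(x^{2l})$ since $y\in C^2$ vanishes at the endpoints), and that is the only case the paper uses downstream; but a proof of the lemma as stated needs either an additional growth hypothesis on $H$ or a truncation-and-limit argument near the endpoints. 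Second, sharpness: the equality case $y=\mathrm{const}\cdot\exp\!\int a$ is in general \emph{not} ``a genuine admissible $C^2$ extremal''. Already for $H(u)=u^2$ and $l\ge 2$ it is $y=(\sin x)^{1/l}$, which is not $C^2$ at $0$ and $\pi$; your $l=1$ example is the exception, not the rule. Sharpness therefore has to be established by approximating the formal extremal within the admissible class, not by exhibiting an extremizer. Neither issue affects the paper, which only invokes the lemma for the polynomial $H$ and does not use the sharpness claim.
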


Using the above inequality~\eqref{eq:GW}, Hall~\cite{Hall02,Hall04} connects the lower bound for $\Theta^{(0)}$ to the optimization problem in equation~\eqref{Xeq} (with $n=0$) below, aiming to maximize its solution by choosing appropriate parameters.
The following result is a generalization of \cite[Proposition~A]{Hall04}, extending the case $B(h,l;0)$ (defined in \eqref{Rdef} below) to $B(h,l;n)$ for any integer $n \ge 0$.

\begin{lemma}\label{optimization}
Let $l\in \mathbb{N}$. Suppose that
\begin{equation}\label{Gdef}
G(u) = (2l-1)\sum_{h=1}^{l} \binom{l}{h} v_h u^{2h}, 
\qquad v_h \in \mathbb{R}, \quad v_l = 1,
\end{equation}
is strictly increasing on $\mathbb{R}_{+}$, and let 
$\lambda = \lambda(v_1, v_2,\ldots, v_{l-1})$ be the unique root of the equation
\begin{equation}\label{leq}
\int_{0}^{\infty} \log\!\left(1 + \frac{G(u)}{(2l-1)\lambda}\right)\frac{du}{u^{2}} = l\pi.
\end{equation}
Let $\mathcal{X}$ be the real positive root of the equation
\begin{equation}\label{Xeq}
\sum_{h=1}^{l} \frac{2l-1}{2h-1} \binom{l}{h} B(h,l;n)\, v_h \mathcal{X}^{h}
= (2l-1)\lambda(v_1, v_2,\ldots, v_{l-1}) B(0,l;n),
\end{equation}
where for integer $n\geq 0$, $h=0,\ldots,l$,
\begin{equation}\label{Rdef}
B(h,l;n) := 4^{\,h-l}\frac{d_{h,l}(n+1,n)}{d_{l,l}(n+1,n)},
\end{equation}
with $d_{h,l}(n+1,n)$ appearing in the following asymptotic formula,
\begin{equation}\label{defofd}
\frac{1}{T}\int_0^T \left|\mathsf{Z}^{(n+1)}(t)\right|^{2h}\left|\mathsf{Z}^{(n)}(t)\right|^{2l-2h} dt \sim
d_{h,l}(n+1,n)(\log T)^{l^2+2ln+2h},\,T\rightarrow \infty.
\end{equation}
Assume that the parameters $v_1, v_2, \ldots, v_{l-1}$ are chosen so as to maximise $\mathcal{X}$.
Then this maximal value is no more than $(\Theta^{(n)})^2$.
\end{lemma}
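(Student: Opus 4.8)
The plan is to carry over to general $n\ge 0$ the method by which Hall \cite{Hall04} treats the case $n=0$, the only genuinely new input being that the moment data is now attached to the derivative pair $(n+1,n)$ rather than to $(1,0)$. We may assume $\Theta^{(n)}<\infty$, the statement being trivial otherwise. Fix a tuple $v_1,\dots,v_{l-1}$ for which $G$ in \eqref{Gdef} is strictly increasing on $\mathbb{R}_+$ (assuming also $v_h\ge0$, as in \eqref{eq:Hspecial}), put $v_l=1$, and let $H$ be the even polynomial \eqref{eq:Hspecial}. A short computation gives $uH'(u)-H(u)=G(u)$, whence $G'(u)=uH''(u)$, so strict monotonicity of $G$ is equivalent to $H''>0$ on $\mathbb{R}_+$; together with $H(0)=H'(0)=0$ and $uH''(u)\to 0$ this verifies all hypotheses of Lemma \ref{inequality}, and the constant $\lambda$ attached there to $H$ is precisely the root of \eqref{leq}.

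Next I would localise at the gaps of $\mathsf{Z}^{(n)}$. Since $\mathsf{Z}$ is real-analytic on $\mathbb{R}$ with $\gg T$ real zeros up to $T$, repeated use of Rolle's theorem shows $\mathsf{Z}^{(n)}$ also has infinitely many real zeros tending to $+\infty$; write the non-negative ones as $0\le t_1^{(n)}<t_2^{(n)}<\cdots$ and set $\delta_m:=t_{m+1}^{(n)}-t_m^{(n)}$, $\mu_m:=\delta_m\log t_m^{(n)}/(2\pi)$, so $\Theta^{(n)}=\limsup_m\mu_m<\infty$ and in particular $\delta_m\to0$. On the $m$-th gap the rescaled function $y_m(x):=\mathsf{Z}^{(n)}\bigl(t_m^{(n)}+\delta_m x/\pi\bigr)$ lies in $C^2[0,\pi]$, vanishes precisely at $x=0,\pi$, and satisfies $y_m'(x)=(\delta_m/\pi)\,\mathsf{Z}^{(n+1)}\bigl(t_m^{(n)}+\delta_m x/\pi\bigr)$. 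Inserting $y_m$ and $H$ into \eqref{eq:GW}, expanding $H(y_m'/y_m)\,y_m^{2l}$ term by term, changing variables back to $t$, and multiplying through by $\delta_m/\pi$ gives the per-gap inequality
\begin{equation*}
\sum_{h=1}^{l}\frac{2l-1}{2h-1}\binom{l}{h}v_h\,(\delta_m/\pi)^{2h}\!\!\int_{t_m^{(n)}}^{t_{m+1}^{(n)}}\!\!\bigl|\mathsf{Z}^{(n+1)}(t)\bigr|^{2h}\bigl|\mathsf{Z}^{(n)}(t)\bigr|^{2l-2h}\,dt
\;\ge\;(2l-1)\lambda\!\!\int_{t_m^{(n)}}^{t_{m+1}^{(n)}}\!\!\bigl|\mathsf{Z}^{(n)}(t)\bigr|^{2l}\,dt .
\end{equation*}

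The decisive step is to sum these over all $m$ with $t_m^{(n)}\le T$ and extract the leading order as $T\to\infty$. Given $\varepsilon>0$, for $m$ beyond some $m_0(\varepsilon)$ one has $\mu_m\le\Theta^{(n)}+\varepsilon$, while $\delta_m\to0$ makes $\log t\sim\log t_m^{(n)}$ uniformly on the $m$-th gap, so $(\delta_m/\pi)^{2h}\le 4^h(\Theta^{(n)}+\varepsilon)^{2h}(1+o(1))(\log t)^{-2h}$ there. The union of the gaps with $m_0\le m$ and $t_m^{(n)}\le T$ is $[\,t_{m_0}^{(n)},\,T+o(1)\,]$, and the finitely many gaps with $m<m_0$ contribute only $O(1)$; invoking the assumed asymptotic \eqref{defofd} for $h=0,1,\dots,l$ — together with a partial summation that inserts the weight $(\log t)^{-2h}$ and thereby replaces $(\log T)^{l^2+2ln+2h}$ by $(\log T)^{l^2+2ln}$ — one arrives, after dividing both sides by $4^{l}d_{l,l}(n+1,n)\,T(\log T)^{l^2+2ln}$, letting $T\to\infty$, recognising the constants $B(h,l;n)$ of \eqref{Rdef}, and finally letting $\varepsilon\downarrow0$, at
\begin{equation*}
\sum_{h=1}^{l}\frac{2l-1}{2h-1}\binom{l}{h}B(h,l;n)\,v_h\,\bigl((\Theta^{(n)})^2\bigr)^{h}\;\ge\;(2l-1)\lambda\,B(0,l;n).
\end{equation*}
The right-hand side is the value at $X=\mathcal{X}$ of the polynomial $P(X):=\sum_{h=1}^{l}\tfrac{2l-1}{2h-1}\binom{l}{h}B(h,l;n)v_hX^{h}$ of \eqref{Xeq}; since $B(h,l;n)>0$ (the leading moment coefficients $d_{h,l}(n+1,n)$ being positive) and, for $v_h\ge0$, $P$ is strictly increasing on $\mathbb{R}_+$, the inequality $P((\Theta^{(n)})^2)\ge P(\mathcal{X})$ forces $(\Theta^{(n)})^2\ge\mathcal{X}$. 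As $v_1,\dots,v_{l-1}$ was an arbitrary admissible tuple, taking the supremum over all such tuples shows the maximal value of $\mathcal{X}$ is at most $(\Theta^{(n)})^2$, as claimed.

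I expect the real work to lie in the summation step: controlling the sparse set of unusually large gaps, justifying $\log t\sim\log t_m^{(n)}$ and the $T+o(1)$ upper endpoint uniformly, and above all executing the partial summation that transfers the weight $(\log t)^{-2h}$ onto \eqref{defofd}. This is precisely the analytic heart of Hall's argument, here to be reproduced with the derivative pair $(n+1,n)$ in place of $(1,0)$; by contrast, checking the hypotheses of Lemma \ref{inequality} and the concluding monotonicity argument are routine.
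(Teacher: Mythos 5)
Your proposal is correct and follows essentially the same route as the paper: apply the generalized Wirtinger inequality (Lemma~\ref{inequality}) with $y=\mathsf{Z}^{(n)}$ on each gap, bound the rescaling factor by the normalized maximal gap, sum, invoke the assumed asymptotic \eqref{defofd}, and conclude via monotonicity of the resulting polynomial in $\mathcal{X}$. The only cosmetic difference is that the paper sums over zeros in the dyadic window $[T,2T]$, so that $\log t\sim\log T$ is uniform and the partial-summation step you flag as the ``analytic heart'' is not needed.
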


\begin{proof}
The following argument is similar to that of \cite[Section 2]{Hall02}. To make the paper self-contained, we include the proof below.
Let $n\geq 0$.
Denote all the zeros of the function $\mathsf{Z}^{(n)}(t)$ in the interval $[T,2T]$ by
$
x_{1} \le x_2 \le \cdots \le x_{m}
$, that is $x_{1}$ is the first zero not less than $T$ and $x_m$ the last
zero not exceeding $2T$.
Suppose that $1\le j < m$, we have
\begin{equation}\label{eq:90}
x_{j+1} - x_j\le \frac{2\pi\kappa }{\log T}.
\end{equation}
By a linear transformation from $[0,\pi]$ to $[a,b]$, we apply Lemma~\ref{inequality}, with $H$ as in \eqref{eq:Hspecial} and $y(x)=\mathsf{Z}^{(n)}(x)$, to obtain 
\begin{align}\label{eq:ineq32}
&\int_{x_{j}}^{x_{j+1}}
\Bigg(
\sum_{h=1}^{l}
\frac{2l-1}{2h-1}
\binom{l}{h}
\left(\frac{x_{j+1}-x_{j}}{\pi}\right)^{2h}
v_h
\Big(\mathsf{Z}^{(n)}(t)\big)^{2l-2h} \Big(\mathsf{Z}^{(n+1)}(t)\Big)^{2h}\nonumber\\
&-
(2l-1)\lambda(v_1,\dots,v_{l-1}) \mathsf{Z}^{(n)}(t)^{2l}
\Bigg)
dt\ge 0 .
\end{align}
Since $G(u)$ is strictly increasing on $\mathbb{R}_{+}$, the inequality remains valid if we replace
$
\frac{x_{j+1}-x_j}{\pi}
$
by
$
\frac{2\kappa}{\log T}
$
throughout. Note that $\mathsf{Z}(t)$ is a real-valued function. Hence, for integers $l,h$, we have $(\mathsf{Z}^{(n)}(t))^{2(l-h)}=|\mathsf{Z}^{(n)}(t)|^{2(l-h)}$. Using \eqref{defofd} and summing the resulting inequality (\ref{eq:ineq32}) over $j$, we may, up to a negligible error, replace the limits of integration $x_1$ and $x_m$ by $T$ and $2T$, respectively. We then obtain
\begin{align*}\label{eq:33}
&T (\log T)^{^{l^2+2nl}}
\Bigg(
\sum_{h=1}^{l}
\frac{2l-1}{2h-1}
\binom{l}{h}
 (2\kappa)^{2h}d_{h,l}(n+1,n)v_h\\
&-
(2l-1)\lambda(v_1,\dots,v_{l-1}) d_{0,l}(n+1,n)
\Bigg)
\geq  o\!\left(T (\log T)^{l^2+2ln}\right) .
\end{align*}
Consequently,
\begin{equation}\label{eq:34}
\kappa^2 \ge \mathcal{X} + o(1),
\quad (T \to \infty),\quad (\Theta^{(n)})^2\geq \mathcal{X},
\end{equation}
where $\mathcal{X}$ is the real positive real root of
\begin{equation}\label{eq:Xeq}
\sum_{h=1}^{l}
\frac{2l-1}{2h-1}
\binom{l}{h}
B(h,l;n) v_h \mathcal{X}^{h}
-
(2l-1)\lambda(v_1,\dots,v_{l-1}) B(0,l;n)
= 0 .
\end{equation}
\end{proof}

In the proofs of Theorems~\ref{largegap1} and~\ref{largegap2}, it is important to have the following data for the leading coefficients of the joint moments of the derivatives of the analogue Hardy $\mathsf{Z}$-function in random matrix theory.
\begin{lemma}\label{Table}
We have the following values of $b_{h,4}(n+1,n)$ in Table~\ref{tab:bhln} for $n = 1,2$ and $h = 0,\ldots,4$. Here $b_{h,l}(n_{1},n_{2})$ is defined as in (\ref{generalasyforcue2}).
\begin{table}[h]
\renewcommand{\arraystretch}{1.6}
\setlength{\tabcolsep}{18pt}
\centering
\caption{Values of $b_{h,4}(n+1,n)$ for $n = 1,2$ and $h = 0,\ldots,4$.}
\label{tab:bhln}
\begin{tabular}{|c|c|c|}
\hline
               & $n=1$ & $n=2$ \\ \hline
$b_{0,4}(n+1,n)$ & $ \frac{31}{2^{20}\cdot 3^{10}\cdot 5^4 \cdot 7^2 \cdot 11 \cdot 13  }$ & $ \frac{103\cdot 413129}{2^{28} \cdot 3^{12} \cdot 5^{5} \cdot 7^3 \cdot 11^2 \cdot 13^2 \cdot 17 \cdot 19 \cdot 23}$ \\ \hline
$b_{1,4}(n+1,n) $& $\frac{71}{2^{22}\cdot 3^{9}\cdot 5^4 \cdot 7^2 \cdot 11^2 \cdot 13 \cdot 17  }$ & $ \frac{58452853}{2^{30} \cdot 3^{11} \cdot 5^{7} \cdot 7^4 \cdot 11^3 \cdot 13^2 \cdot 17 \cdot 19  } $ \\ \hline
$b_{2,4}(n+1,n) $& $\frac{43\cdot 3467}{2^{24}\cdot 3^{10}\cdot 5^5 \cdot 7^4 \cdot 11^2 \cdot 13 \cdot 17\cdot 19 }$ & $\frac{67\cdot 85489981}{2^{32} \cdot 3^{12} \cdot 5^{7} \cdot 7^5 \cdot 11^3 \cdot 13^2 \cdot 17 \cdot 19 \cdot 23 }$ \\ \hline
$b_{3,4}(n+1,n)$ & $\frac{271\cdot 11483}{2^{26}\cdot 3^{13}\cdot 5^5 \cdot 7^3 \cdot 11^2 \cdot 13^2 \cdot 17\cdot 19 }$ & $\frac{71\cdot 389\cdot 139091}{2^{34} \cdot 3^{8} \cdot 5^{6} \cdot 7^5 \cdot 11^3 \cdot 13^2 \cdot 17^2 \cdot 19 \cdot 23 \cdot 29  }$ \\ \hline
$b_{4,4}(n+1,n)$ & $\frac{103\cdot 413129}{2^{28}\cdot 3^{12}\cdot 5^5 \cdot 7^3 \cdot 11^2 \cdot 13^2 \cdot 17\cdot 19\cdot 23 }$ & $\frac{449\cdot 1721279377}{2^{36} \cdot 3^{10} \cdot 5^{7} \cdot 7^{4} \cdot 11^3 \cdot 13^3 \cdot 17^2 \cdot 19 \cdot 23   \cdot 31 }$ \\ \hline
\end{tabular}
\end{table}
\end{lemma}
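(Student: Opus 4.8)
The plan is to obtain the entries of Table~\ref{tab:bhln} by reducing the random matrix averages in \eqref{generalasyforcue2} for $l=4$, $(n_1,n_2)=(n+1,n)$ to derivatives at the origin of the Hankel determinant \eqref{bessel}, and then to evaluate those derivatives explicitly using the recursive algorithm furnished by the matrix (equivalently, higher order scalar) linear differential equation. Concretely, I would first recall from \cite{KW24a} the identity expressing $b_{h,l}(n_1,n_2)$ — and likewise $m_{h,l}(n_1,n_2)=l^2+2hn_1+2(l-h)n_2$, which for us specialises to $l^2+2ln+2h$ as in \eqref{defofd} — in terms of the Taylor coefficients of $\det[I_{j+k+1}(2\sqrt x)]_{j,k=0}^{l-1}$ at $x=0$; via \eqref{unitaryintegral} this Hankel determinant is $(-1)^{l(l-1)/2}$ times the unitary matrix integral $\langle (\det U)^l e^{z\,\mathrm{Tr}(U+U^\dagger)}\rangle_{U(l)}$ with $z=\sqrt x$, so the needed input is a sufficiently long segment of the power series of that integral.

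Next I would invoke the differential structure described in the excerpt: realising the eigenvalue form of \eqref{unitaryintegral} inside the Selberg correlation integral class (following \cite{Da68}, \cite{FR12}) yields an $(l+1)\times(l+1)$ matrix differential equation for $l=4$, hence a scalar linear ODE of order $5$ in $x$ annihilating \eqref{bessel}. Converting this ODE into a recursion on Taylor coefficients at $x=0$ (as in \cite{KW24b}) lets me generate the coefficients to whatever order is required; for $l=4$ and $n\le 2$ the relevant derivative is of order $2n+2\le 6$ beyond the base level, so only a modest number of terms is needed. Substituting these rational coefficients into the $\cite{KW24a}$ formulas for $b_{h,4}(n+1,n)$ and clearing denominators gives the explicit primorial-style factorisations listed in Table~\ref{tab:bhln}; the consistency check $b_{0,4}(2,1)=b_{4,4}(3,2)$'s predecessor pattern — note in particular $b_{4,4}(2,1)=b_{0,4}(3,2)$, visible already in the table — serves as an internal verification of the arithmetic.

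The main obstacle I anticipate is purely computational bookkeeping rather than conceptual: one must set up the order-$5$ scalar ODE for \eqref{bessel} correctly (the coefficients are polynomials in $x$ whose degrees grow, so care is needed in extracting the recursion), and then carry the exact rational arithmetic far enough and faithfully enough that the large prime factorisations in the table are reproduced. A secondary point requiring attention is correctly translating between the Keating--Wei partition-sum expressions for $b_{h,l}(n_1,n_2)$ and the Hankel-determinant-derivative form, including the combinatorial factors $4^{h-l}$ that enter $B(h,l;n)$ in \eqref{Rdef}; a sign or normalisation slip there would propagate into every entry. Since the statement of Lemma~\ref{Table} merely records the numerical outcome of this procedure, the ``proof'' consists of carrying out the above reduction and computation and exhibiting the resulting values, which is what Table~\ref{tab:bhln} does.
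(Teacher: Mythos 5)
Your proposal is correct in substance but follows a genuinely different route from the paper's own proof. The paper's proof of Lemma~\ref{Table} is a direct, one-step computation: it quotes the explicit partition-sum formula for $b_{h,l}(n_1,n_2)$ from \cite[Theorem 24]{KW24a} (a finite multiple sum over compositions $h_1+\cdots+h_l=k$ and $m_0+\cdots+m_l=2l+2h-2k$), specialises to $l=4$ with $(n_1,n_2)=(2,1)$ and $(3,2)$, and evaluates the sum exactly in Maple; no differential equation or Hankel-determinant expansion enters the proof at all. Your route --- expressing $b_{h,4}(n+1,n)$ in terms of Taylor coefficients of the Hankel determinant \eqref{bessel} at $x=0$ and generating those coefficients from the order-$5$ scalar ODE via the recursion of \cite{KW24b} and \cite{FW25} --- is precisely the alternative the authors themselves flag in the closing Remark (citing \cite[Theorem~3]{KW24b}, \cite[Theorem~1.11]{AGKW24}, and the recursive formulas for the Taylor coefficients), noting it is the more efficient option for larger $l$. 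So your approach buys scalability at the cost of an extra translation step (from the partition-sum normalisation to the Hankel-derivative normalisation, where, as you rightly note, signs and factors like $4^{h-l}$ must be tracked), while the paper's approach is more direct and self-contained for the single case $l=4$ actually needed. Your observed consistency check $b_{4,4}(2,1)=b_{0,4}(3,2)$ is valid and does hold in the table, since both coefficients govern the same moment $\frac{1}{T}\int_0^T |\mathsf{Z}^{(2)}(t)|^{8}\,dt$; this is a worthwhile sanity check that the paper does not mention.
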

\begin{proof}
By \cite[Theorem 24]{KW24a} with $n_{1}=2,n_{2}=1$, we have, for $h=0,\ldots,l$,
\begin{align*}
&b_{h,l}(2,1)
\\
=&(-1)^{l-h+\frac{l(l-1)}{2}}
\sum_{k=0}^{2h}
\binom{2h}{k}
\sum_{\substack{h_{1}+\cdots+h_{l}=k\\m_{0}+\ldots+m_{l}=2l+2h-2k}}(-\frac{1}{2})^{m_{0}}
\binom{k}{h_{1},\ldots,h_{l}}\\
&\binom{2l+2h-2k}{m_{0},\ldots,m_{l}}\prod_{j=1}^{l}\frac{1}{(2l+m_{j}+2h_{j}-j)!}\prod_{1\leq j<s\leq l}\left(m_{s}+2h_{s}-m_{j}-2h_{j}-s+j\right).
\end{align*}
We set $l=4$ in the above computation performed using Maple.
Then we obtain the values of $b_{h,4}(2,1)$ for $h=0,\ldots,4$.  
Similarly, by applying \cite[Theorem 24]{KW24a} with $n_1=3$, $n_2=2$, and $l=4$, we obtain the values of $b_{h,4}(3,2)$ for $h=0,\ldots,4$.
\end{proof}

Methods for locating large positive real solutions of equation (\ref{Xeq}) with suitable parameters were studied in \cite{Hall04}. By combining these methods with the random matrix data provided in Lemma \ref{Table}, we are now ready to prove Theorems \ref{largegap1} and \ref{largegap2}.

\begin{proof}[Proof of Theorem \ref{largegap1}]
By the assumption of this theorem, for $h=0,\ldots,4$ we have
\[
d_{h,4}(2,1)=c_{4}\, b_{h,4}(2,1),
\]
where $d_{h,4}$ and $b_{h,4}$ are defined in \eqref{defofd} and
\eqref{generalasyforcue2}, respectively. Here $c_{4}$ is the
arithmetic factor given in \eqref{arithmeticfactor}. For $n\geq 0$ and $h=1,\ldots,4$, define
\begin{equation}\label{defofA}
A(h,4;n)=\left|\frac{2h-1}{2h-3}\right|\frac{B(h-1,4;n)}{B(h,4;n)},
\end{equation}
where $B(h,4;n)$ is defined in \eqref{Rdef}. By Lemma \ref{Table}, we have
\[
\begin{aligned}
&A(1,4;1)=\frac{5797}{213}, \quad
A(2,4;1)=\frac{2974545}{149081},\\
& A(3,4;1)=\frac{87212385}{21783251}, \quad
A(4,4;1)=\frac{501014773}{638284305}.
\end{aligned}
\]
Set
\begin{equation}\label{abcd}
A:=A(1,4;1), \quad B:=A(2,4;1), \quad C:=A(3,4;1), \quad D:=A(4,4;1).
\end{equation}
Then, by an argument almost identical to that in \cite[Section~7]{Hall04},
with the slight difference that the quantities $A,B,C,D,X$ there are
replaced by \eqref{abcd} and $\mathcal{X}$ here, respectively, we find that
\[
\mathcal{X}
=
\frac{AB(C-D)+C(B^2+CD-2BD)}{16(B-C)(C-D)}
=
3.93116\ldots
\]
is a solution of equation \eqref{Xeq}. In this case, the corresponding parameters $v_{1},v_{2},v_{3}$ and
$\lambda$ in \eqref{Xeq} are computed in the same manner as in
\cite[Section~7]{Hall04}, as we now explain. Define
\[
\tau_1=\frac{4C-4D}{D(B-C)}, \quad
\tau_2=\frac{B-D}{D(B-C)}, \quad
\tau_3=\frac{4C-4D}{CD(B-C)}, \quad
\tau_4=\frac{C-D}{CD(B-C)},
\]
and
\[
\sigma_1=\tau_{1}\mathcal{X}, \quad
\sigma_2=\tau_{2}\mathcal{X}, \quad
\sigma_3=\tau_{3}\mathcal{X}^2, \quad
\sigma_4=\tau_{4}\mathcal{X}^2.
\]
By comparing coefficients on both sides of the identity
\[
t^8 - 4 v_3 t^6 + 6 v_2 t^4 - 4 v_1 t^2 + \lambda
=
\bigl(t^4 + \sigma_2 t^2 + \sigma_4\bigr)^2
-
\bigl(\sigma_1 t^3 + \sigma_3 t\bigr)^2,
\tag{81}
\]
we obtain
\[
v_{1}=0.95297\ldots,\quad
v_{2}=1.01903\ldots,\quad
v_{3}=1.07681\ldots,\quad
\lambda=0.98492\ldots
\]
Note that $v_{1},v_{2},v_{3}$ are all positive. This ensures that $G(u)$ is
strictly increasing on $\mathbb{R}_{+}$, as required by the assumptions of
Lemma~\ref{optimization}. Therefore, by Lemma~\ref{optimization} with $l=4$ and $n=1$,
\[
\Theta^{(1)} \geq \sqrt{\mathcal{X}} > 1.98,
\]
as claimed in Theorem~\ref{largegap1}.
\end{proof}

\begin{proof}[Proof of Theorem \ref{largegap2}]
According to the assumptions of the theorem, for each $h = 0,\ldots,4$, we have
\[
d_{h,4}(3,2) = c_4 \, b_{h,4}(3,2),
\]
where $d_{h,4}$ and $b_{h,4}$ are as defined in \eqref{defofd} and \eqref{generalasyforcue2}, respectively, and $c_4$ is the arithmetic factor as in \eqref{arithmeticfactor}.
Let $A(h,4;2)$ be as in (\ref{defofA}). Then by Lemma \ref{Table}, we have 
\[
\begin{aligned}
&A(1,4;2)=\frac{81913152475}{4033246857}, \quad
A(2,4;2)=\frac{84698183997}{5727828727},\\
& A(3,4;2)=\frac{2823819562411}{933497701947}, \quad
A(4,4;2)=\frac{13933317551283}{22412778767917}.
\end{aligned}
\]
Then, by an argument almost identical to the proof of Theorem~\ref{largegap1}, with the only difference being that we replace 
$A(1,4;1)$, $A(2,4;1)$, $A(3,4;1)$, $A(4,4;1)$ by 
$A(1,4;2)$, $A(2,4;2)$, $A(3,4;2)$, $A(4,4;2)$, we obtain
\[
\mathcal{X} = 2.94783\ldots
\]
Correspondingly, 
\[
v_1 = 0.86987\ldots, \quad v_2 = 0.87121\ldots, \quad v_3 = 0.89954\ldots, \quad \lambda = 0.89144\ldots.
\]
Since $v_1, v_2, v_3$ are all positive, $G(u)$ is strictly increasing on $\mathbb{R}_+$. This verifies the requirement in the assumptions of Lemma~\ref{optimization}. Therefore, by Lemma~\ref{optimization} with $l=4$ and $n=2$, we have
\[
\Theta^{(2)} \geq \sqrt{\mathcal{X}} > 1.71,
\]
as claimed in Theorem~\ref{largegap2}.
\end{proof}
\begin{remark}
The lower bounds for $\Theta^{(1)}$ and $\Theta^{(2)}$ in Theorems~\ref{largegap1} and~\ref{largegap2}, and more generally for $\Theta^{(n)}$, can be improved under the assumption of Conjecture~\ref{generalKSconjecture} for larger moment exponents~$l$. In particular, for $l=5,6$, optimization algorithms for maximizing $\mathcal{X}$ in equation~\eqref{Xeq} are provided in \cite[Section~8]{Hall04} and \cite[Appendix]{Hall04}.
 Based on these results, one can apply arguments similar to those used in Theorems~\ref{largegap1} and~\ref{largegap2} with $l=5,6$ to further improve the lower bounds for $\Theta^{(n)}$. It is worth mentioning that this approach requires the computation of $b_{h,l}(n+1,n)$ for $l=5,6$. These quantities can be obtained either from \cite[Theorem~24]{KW24a}, as used in Lemma~\ref{Table}, or from \cite[Theorem~3]{KW24b} and \cite[Theorem~1.11]{AGKW24}, together with the recursive formulas for the Taylor coefficients of (\ref{bessel}) at $x=0$ (see \cite[Theorem~5]{KW24b} and \cite[Proposition~4.5]{FW25}), which allow for more efficient computation.
 We do not pursue this direction further in the present paper, since Theorems~\ref{largegap1} and~\ref{largegap2} already achieve the main objectives of this work, and the above mentioned extensions follow from essentially the same method.
\end{remark}

\begin{acknowledgement}
The work of PJF was supported by the Australian Research Council Discovery Project
DP250102552.
The work of FW was supported by the Royal Society, grant URF$\backslash$R$\backslash$231028. Both authors thank the organisers of the MATRIX program ``Log-gases in Caeli Australi'', held in Creswick, Victoria, Australia during the first half of August 2025 
for the stimulating event and 
facilitating our collaboration.
\end{acknowledgement}

\bigskip

\providecommand{\bysame}{\leavevmode\hbox to3em{\hrulefill}\thinspace}
\providecommand{\MR}{\relax\ifhmode\unskip\space\fi MR }
\providecommand{\MRhref}[2]{%
  \href{http://www.ams.org/mathscinet-getitem?mr=#1}{#2}
}
\providecommand{\href}[2]{#2}

\end{document}